\newtheorem{theorem}{Theorem}
\newtheorem{lemma}{Lemma}
\newtheorem{definition}{Definition}
\newcommand{\set}[1]{{\mathcal{#1}}}
\newcommand{\vsss}[1]{ \langle #1 \rangle}
\newcommand{\rank}[1]{ \operatorname{rank}( #1 )}
\newcommand{\diag}[1]{ \operatorname{diag}\left( #1 \right)}
\newcommand{\vs}[1]{\mathbb{#1}}
\newcommand{\vc}[1]{\mathbf{#1}}
\newcommand{\mt}[1]{\mathbf{#1}}
\newcommand{\setl}[1]{\{#1\}}
\newcommand{\di}[1]{\dim\left(#1\right)}
\newcommand{\vu}{\vs{U}}
\newcommand{\vv}{\vs{V}}
\newcommand{\vx}{\vs{X}}
\newcommand{\vy}{\vs{Y}}
\newcommand{\vz}{\vs{Z}}
\newcommand{\mta}{\mt{A}}
\newcommand{\mtb}{\mt{B}}
\newcommand{\mtm}{\mt{M}}
\newcommand{\mtx}{\mt{X}}
\newcommand{\mty}{\mt{Y}}
\newcommand{\mtz}{\mt{Z}}
\newcommand{\vcx}{\vc{x}}
\newcommand{\vcy}{\vc{y}}
\newcommand{\deq}{\triangleq}
\DeclareMathOperator{\wt}{\mathsf{w}}
\DeclareMathOperator{\supp}{\mathsf{supp}}
\begin{document}

%
%
%

\author[1]{Daniel Salmond}
\author[2]{Alex Grant}
\author[1]{Ian Grivell}
\author[2]{Terence Chan}
\affil[1]{Cyber and Electronic Warfare Division, Defence Science \& Technology Organisation, Adelaide, South Australia\footnote{Correspondence to Daniel Salmond: daniel.salmond@dsto.defence.gov.au; PO Box 1500, Edinburgh SA, 5111, Australia; Tel: +61 8 7389 5000}}
\affil[2]{Institute for Telecommunications Research, University of South Australia}

\title{On the rank of random matrices over finite fields}

%
\date{\today}

\maketitle

\begin{abstract} A novel lower bound is introduced for the full rank probability of random finite field matrices, where a number of elements with known location are identically zero, and remaining elements are chosen independently of each other, uniformly over the field. The main ingredient is a result showing that constraining additional elements to be zero cannot result in a higher probability of full rank.  The bound then follows by ``zeroing'' elements to produce a block-diagonal matrix, whose full rank probability can be computed exactly. The bound is shown to be at least as tight and can be strictly tighter than existing bounds.  \end{abstract}




\section{Introduction}\label{sn:beIntro}

Consider a $n\times k$ random matrix $\mtm\in\vs{F}_q^{n\times k}$ with elements $m_{ij}\in\vs{F}_q$, $i=1,2,\dots,n$, $j=1,2,\dots,k$ drawn from a $q$-ary finite field $\vs{F}_q$. Suppose that $\mtm$ is such that certain elements are zero with probability one, and the locations of these elements are known. The remaining elements are independently and identically distributed over $\vs{F}_q$. We will prove that the probability that $\mtm$ has full rank cannot increase if we fix any number of additional elements to be identically zero. This result provides a path to a novel lower bound on the probability that $\mtm$ has full rank.



The random matrix distribution of central interest in this paper is defined as follows.
\begin{definition}\label{dn:pDistUonB}
For given $\mtb\in\vs{F}_2^{n\times k}$, let $\set{U}(\mtb,q)$ be the probability distribution on random matrices $\mtm\in\vs{F}_q^{n\times k}$ whose elements  $m_{ij}$, $i=1,2,\dots,n$, $j=1,2,\dots,k$ are chosen  independently according to
\begin{equation}\label{eq:elementsOfMbyP}
 \Pr(m_{ij}=m) \deq \left\{ \begin{array}{ll}
1&\text{if $b_{ij}=0$ and $m=0$}\\
0&\text{if $b_{ij}=0$ and $m\neq 1$}\\
q^{-1}&\text{if $b_{ij} = 1$}
\end{array}\right.
\end{equation}
For matrices drawn in this way, we write $\mtm\sim \set{U}(\mtb,q)$. The corresponding support (set of matrices with positive probability) shall be denoted $\supp(\mtb,q)$.
\end{definition} 

The matrix $\mtb\in\vs{F}_2^{n\times q}$ identifies whether the elements of $\mtm\sim \set{U}(\mtb,q)$ are either uniformly distributed over $\vs{F}_q$ or zero with probability one. Let $0\leq \wt(\mtb) \leq nk$ be the number of non-zero elements of $\mtb$ (sometimes called the Hamming weight). It follows from Definition \ref{dn:pDistUonB} that for $\mtm\sim\set{U}(\mtb,q)$,
\begin{align}
\Pr(\mtm= \mta) = q^{-\wt(\mtb)}, \qquad \forall \mta \in \supp(\mtb,q).
\end{align}
Thus $\mtm$ is distributed uniformly over its support, i.e. is quasi-uniform in the terminology of Chan~\cite{ChGrBr13}.

Suppose $\mtb= \vc{1}^{n\times k}$ and hence $\wt(\mtb)=nk$. In this case, $\mtm\sim \set{U}(\mtb,q)$ shall be referred to as \emph{full weight}\footnote{Such matrices have been described as \emph{zero-free}~\cite{DuRe78} in the sense there are no identically  zero elements in $\mtm\sim \set{U}(\mtb,q)$.}. Note that $\supp(\vc{1}^{n\times k},q) = \vs{F}_q^{n\times k}$ and hence for $\mtm\sim\set{U}(\vc{1}^{n\times k},q)$, $\Pr(\mtm=\mta) = q^{-nk}$, $\forall \mta\in \vs{F}_q^{n\times k}$. Thus full-weight matrices are uniformly distributed in the usual sense.

The objective of this paper to provide a lower bound on the \emph{full rank probability}
\begin{align}
  P_{\text{FR}}(\mtm) \deq \Pr(\rank{\mtm} = \min(n,k))
\end{align}
for $\mtm\sim\set{U}(\mtb,q)$, where $\mtb\in\vs{F}_2^{n\times k}$ is given.


The literature on the rank of random matrices over finite fields includes contributions from the fields of random matrix theory, combinatorics and coding theory. The contributions can be broadly summarised into two distinct approaches: a) those regarding the expected rank of matrices whose elements are identically and independently distributed over the finite field, and b) contributions regarding the expected rank for a given total matrix weight. Studholme and Blake provide a thorough survey~\cite{StBl06} across both of these approaches.

The first approach includes the work of Levitskaya and Kovalenko~\cite{KoLe93, Levitskaya05} and focuses on the asymptotic properties of the
expected rank as a function of matrix size and the distribution on the finite field elements. Similarly, Bl\"{o}mer \emph{et al.}~\cite{BlKaWe97} and Cooper~\cite{Cooper00} explored the allocation of probability mass to the zero element of the finite field, and the threshold at which the expected rank scales linearly with the matrix size.

The second approach includes Erd\H{o}s and R\'{e}nyi~\cite{ErRe63} who considered the expected rank of random binary matrices with a fixed total weight. Studholme and Blake~\cite{StBl06} presented related results regarding the minimal weight required for a random linear coding structure such that the probability of successful decoding approaches that of a full weight structure.

Lemma~\ref{lm:zeroFreeProbMat} reproduces a fundamental result for full weight matrices~\cite{BlKaWe97,Cooper00}. We will use the main idea in the proof of this well-known result (provided in the Appendix) as the basis for proving Theorem \ref{lm:lower_block_diag} in this paper. 
\begin{lemma}\label{lm:zeroFreeProbMat}
Let $\mtm\sim\set{U}(\vc{1}^{n\times k},q)$,  where $n\geq k$. Then, 
\begin{align}\label{eq:equiprob}
P_{\text{FR}}(\mtm) = \prod_{i=n-k+1}^n\left( 1- \frac{1}{q^{i}}\right)
\end{align}
\end{lemma}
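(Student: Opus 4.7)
My plan is to build $\mtm$ one column at a time and compute the conditional probability, at each step, that the newly drawn column is linearly independent of those preceding it. Since $\mtm \sim \set{U}(\vc{1}^{n\times k}, q)$ is full weight, every entry is uniform on $\vs{F}_q$ and all entries are mutually independent, so the columns $\vc{m}_1, \vc{m}_2, \dots, \vc{m}_k$ are independent and each is uniformly distributed on $\vs{F}_q^n$. With $n \geq k$, full rank is equivalent to linear independence of these $k$ columns.

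The key computation is as follows. Condition on any fixed realization of $\vc{m}_1, \dots, \vc{m}_{j-1}$ that happens to be linearly independent. These vectors span a $(j-1)$-dimensional subspace of $\vs{F}_q^n$, which contains exactly $q^{j-1}$ elements. Since $\vc{m}_j$ is independent of the earlier columns and uniform over $\vs{F}_q^n$, the probability that $\vc{m}_j$ avoids this subspace is
\begin{align}
\frac{q^n - q^{j-1}}{q^n} = 1 - \frac{1}{q^{n-j+1}}.
\end{align}
This conditional probability does not depend on which particular linearly independent $(j-1)$-tuple we conditioned on, so we may chain the conditioning and obtain
\begin{align}
P_{\text{FR}}(\mtm) = \prod_{j=1}^{k} \left(1 - \frac{1}{q^{n-j+1}}\right).
\end{align}
Re-indexing with $i = n - j + 1$ (so $j=1$ gives $i=n$ and $j=k$ gives $i = n-k+1$) converts the product to the form \eqref{eq:equiprob}.

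There is no real obstacle here; the only care required is to justify that the conditional probability of the $j$-th column being linearly independent of its predecessors depends only on the \emph{dimension} of the subspace they span (namely $j-1$) and not on the specific columns, which is immediate from the uniformity and independence of $\vc{m}_j$. The base case $j=1$ corresponds to $\vc{m}_1 \neq \vc{0}$, which has probability $1 - q^{-n}$, matching the formula. The argument requires $n \geq k$ so that the subspace dimension $j-1 \leq k-1 < n$ at every step, ensuring each factor is strictly positive; this is exactly the hypothesis of the lemma.
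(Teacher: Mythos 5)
Your proof is correct and follows essentially the same route as the paper's: build the matrix column by column, condition on the first $j-1$ columns being linearly independent, and use uniformity and independence of the new column to get the factor $(q^n - q^{j-1})/q^n$, then telescope. The re-indexing to match \eqref{eq:equiprob} is also handled correctly.
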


A lower bound on the full rank probability of any square random matrix over a finite field was developed by Ho \emph{et al.}\footnote{The bound of Ho \emph{et al.} was intended for the analysis of Edmonds matrices, but also applies to square random matrices}~\cite{HoMeKoKaEfShLe06}. Their bound assumes the existence of at least one full rank realisation. Noting that the determinant of a square random matrix is a multivariate polynomial, their bound follows from repeated application of the Schwarz-Zippel lemma~\cite{HoJaVyXi11,Moshkovitz10,Schwartz80}.
Let $\mtm\sim\set{U}(\mtb,q)$, where $\mtb\in\vs{F}_2^{n\times n}$ is such that $\Pr(\det(\mtm) = 0) \neq 1$. Then the bound of Ho \emph{et al.} is \begin{align}\label{eq:hobound1}
P_{\text{FR}}(\mtm) \geq \left(1-\frac{1}{q} \right)^{n}
\end{align} 
Apart from  the condition $\Pr(\det(\mtm) = 0) \neq 1$, this bound is independent of $\mtb$. Furthermore, the bound is tight, with equality achieved when $\mtb=\mt{I}_n$, the $n\times n$ identify matrix. In this case, $\mtm\sim\set{U}(\mt{I}_n,q)$ has full rank if and only if $m_{ii}\neq 0$, $i=1,2,\dots,n$.

The bound that we provide in this paper exhibits a greater dependence on the structure of $\mtb$, and is at least as tight as and can be strictly tighter than  (\ref{eq:hobound1}).

\section{Main Results}\label{sn:beMain}
Define a partial order $\prec$ on binary matrices as follows. For $\mta\neq\mtb\in\vs{F}_2^{n\times k}$, $\mta \prec \mtb$ if $b_{ij}=0\implies a_{ij} =0$, $\forall 1\leq i\leq n, 1\leq j\leq k$. 

The following two theorems are the main results of the paper.  Proofs are given in the Appendix.
\begin{theorem}\label{lm:lower_block_diag}
Let $\mtx\sim\set{U}(\mta,q)$ and $\mty\sim\set{U}(\mtb,q)$ be independent random matrices where $\mta\prec\mtb \in \vs{F}_2^{n\times k}$. Then
\begin{align}
P_{\text{FR}}(\mtx) \leq P_{\text{FR}}(\mty)
\end{align}
\end{theorem}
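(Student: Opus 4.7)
The plan is to reduce to a single-entry change and then, in that reduced setting, condition on every row except the one containing the flipped entry, so that the comparison collapses to a test on one linear form over $\vs{F}_q$.

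Since the partial order $\prec$ is generated by single-bit flips, I would induct on $\wt(\mtb)-\wt(\mta)\geq 1$ and treat only the base case in which $\mta$ and $\mtb$ differ at exactly one position $(i^*,j^*)$, with $b_{i^*j^*}=1$ and $a_{i^*j^*}=0$. Transposition preserves $\prec$, the distribution class and the rank, so I may additionally assume $n\geq k$. Couple $\mtx$ and $\mty$ to agree on every entry except $(i^*,j^*)$; at that entry $\mtx$ is deterministically $0$ and $\mty$ is an independent uniform $y\in\vs{F}_q$. Let $\mt{R}\in\vs{F}_q^{(n-1)\times k}$ be the submatrix obtained by deleting row $i^*$. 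Its distribution is identical under $\mtx$ and $\mty$, and under each distribution row $i^*$ is independent of $\mt{R}$.

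Condition on $\mt{R}$ and let $U=\operatorname{rowspan}(\mt{R})$. Because $\operatorname{rank}(\mtm)=k$ iff either $\dim U=k$, or $\dim U=k-1$ and row $i^*$ lies outside $U$, the events $\dim U=k$ and $\dim U\leq k-2$ contribute the same conditional probability to $P_{\text{FR}}(\mtx)$ and $P_{\text{FR}}(\mty)$ (namely $1$ and $0$), and the comparison reduces to the event $\{\dim U=k-1\}$. On that event pick any generator $\vec{u}$ of the one-dimensional subspace $U^{\perp}\subseteq\vs{F}_q^k$; full rank is equivalent to $\vec{r}\cdot\vec{u}\neq 0$, where $\vec{r}$ is row $i^*$. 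Writing $S=\{j:b_{i^*j}=1\}$, this inner product equals $\sum_{j\in S} r_j u_j$ under $\mty$ and $\sum_{j\in S\setminus\{j^*\}} r_j u_j$ under $\mtx$, with every $r_j$ appearing being independent and uniform on $\vs{F}_q$. If $u_{j^*}=0$ the two sums coincide and the conditional probabilities are equal; if $u_{j^*}\neq 0$ the $\mty$ sum is uniform on $\vs{F}_q$ by independence of $r_{j^*}$, so is nonzero with probability $(q-1)/q$, whereas the $\mtx$ sum is either identically zero (when $u_j=0$ for every $j\in S\setminus\{j^*\}$) or itself uniform, giving probability $0$ or $(q-1)/q$. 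In every sub-case the $\mty$ conditional probability dominates, so averaging over $\mt{R}$ gives $P_{\text{FR}}(\mty)\geq P_{\text{FR}}(\mtx)$ and closes the induction.

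The step I expect to be the main obstacle is choosing the right conditioning. The naive move of conditioning on every entry outside $(i^*,j^*)$ and comparing pointwise fails: when the unique singular value of the flipped entry happens to be nonzero, the deterministic choice $y=0$ made by $\mtx$ actually beats a uniformly random $y$, and only after averaging does the inequality recover. Conditioning instead on the complement of row $i^*$ reduces the question to a single uniform coordinate acting on a single linear form over $\vs{F}_q$, where uniformity of that coordinate overrides every pointwise reversal.
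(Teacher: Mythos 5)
Your argument is correct. The skeleton matches the paper's: reduce to a pair $\mta\prec\mtb$ differing in one entry, condition on the part of the matrix complementary to the line through the flipped entry, and split on the rank of the conditioned part, with only the codimension-one case requiring work. (Your conditioning on the rows other than $i^*$ is, after transposition, the paper's conditioning on the columns other than the one containing the flip, and the paper likewise observes that the inequality only holds after averaging, via its counting of realisations.) Where you genuinely diverge is in the critical case. The paper counts points: the admissible last column lives in a coordinate subspace $\vx\subset\vy$ with $\di{\vx}=\di{\vy}-1$, and it compares $|\vy|-|\vz\cap\vy|$ with $|\vx|-|\vz\cap\vx|$, which forces it to control $\di{\vz\cap\vy}-\di{\vz\cap\vx}\in\{0,1\}$ via a separate subspace-intersection lemma (Lemma~\ref{lm:bndsOnDims}, quoted from the literature) and then to check the ratio $(q^{w-d}-1)/(q^{w-d-1}-1)\geq q$ in subcases. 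You instead pass to the orthogonal complement: since $U$ has codimension one, membership in $U$ is a single linear form $\vec r\cdot\vec u$, and the comparison collapses to the elementary fact that a linear form in independent uniform coordinates is either identically zero or uniform on $\vs{F}_q$. This eliminates Lemma~\ref{lm:bndsOnDims} and the cardinality bookkeeping entirely, at the cost of obscuring the "zeroing an entry shrinks the support by exactly $q$ but the full-rank set by at least $q$" picture that the paper's counting formulation makes explicit and reuses rhetorically elsewhere. One small point worth stating explicitly if you write this up: $\vec r\in U\iff\vec r\cdot\vec u=0$ uses $(U^{\perp})^{\perp}=U$, which holds over $\vs{F}_q$ by nondegeneracy of the standard bilinear form and dimension counting even though the form is isotropic.
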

\begin{theorem}\label{th:mainbound}
Let $\mta = \diag{ \mta_1,\dotsc, \mta_L } \prec \mtb \in \vs{F}_2^{n\times k}$, where $\mta_\ell=\vc{1}^{n_\ell\times k_\ell}$ and $n_\ell\leq k_\ell$, $\ell= 1,2,\dots,L$.  Let $\mtx\sim \set{U}(\mta, q)$ and $\mty\sim\set{U}(\mtb,q)$.
Then,
\begin{align}
  P_{\text{FR}}(\mty)  &\geq P_{\text{FR}}(\mtx) \\
  &= \prod_{\ell=1}^{L} \prod_{i=k_\ell-n_\ell+1}^{k_\ell} \left(1-\frac{1}{q^i}\right)
  \label{eq:mainbound} \\
&\geq \left(1-\frac{1}{q}\right)^n.\label{eq:tight}
\end{align}
\end{theorem}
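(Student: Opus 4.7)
The proof splits naturally into the three assertions of the theorem, the first of which is immediate: since $\mta\prec\mtb$ by hypothesis, Theorem~\ref{lm:lower_block_diag} directly yields $P_{\text{FR}}(\mty)\geq P_{\text{FR}}(\mtx)$. So the work is to evaluate $P_{\text{FR}}(\mtx)$ exactly and then bound it below by $(1-1/q)^n$.

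For the evaluation step, I would exploit the block-diagonal structure of $\mta=\diag{\mta_1,\dots,\mta_L}$. Because the $m_{ij}$ with $b_{ij}=0$ are deterministically zero, a matrix drawn from $\set{U}(\mta,q)$ is block diagonal with blocks $\mtx_\ell\sim\set{U}(\vc{1}^{n_\ell\times k_\ell},q)$ that are mutually independent (independence follows from Definition~\ref{dn:pDistUonB}, as all entries are chosen independently). Since $n=\sum_\ell n_\ell\leq\sum_\ell k_\ell=k$, the matrix $\mtx$ has full rank $n$ if and only if each block $\mtx_\ell$ has full rank $n_\ell$. Thus $P_{\text{FR}}(\mtx)=\prod_\ell P_{\text{FR}}(\mtx_\ell)$. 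Each block is a full-weight random matrix; applying Lemma~\ref{lm:zeroFreeProbMat} to its transpose (valid because rank is transpose-invariant and the transposed block has dimensions $k_\ell\times n_\ell$ with $k_\ell\geq n_\ell$) gives
\begin{align*}
P_{\text{FR}}(\mtx_\ell)=\prod_{i=k_\ell-n_\ell+1}^{k_\ell}\left(1-\frac{1}{q^i}\right),
\end{align*}
and multiplying over $\ell$ yields \eqref{eq:mainbound}.

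For the final inequality \eqref{eq:tight}, I would simply count and bound factors. The inner product for block $\ell$ contains exactly $n_\ell$ terms, so the total number of factors in \eqref{eq:mainbound} is $\sum_\ell n_\ell=n$. The smallest index appearing in any inner product is $k_\ell-n_\ell+1\geq 1$ (since $n_\ell\leq k_\ell$), so every factor satisfies $1-q^{-i}\geq 1-q^{-1}$. Combining these two observations gives $P_{\text{FR}}(\mtx)\geq(1-1/q)^n$, and chaining with the first inequality completes the theorem.

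There is no real obstacle here: the three assertions are layered so that each relies only on previously established machinery (Theorem~\ref{lm:lower_block_diag}, independence from Definition~\ref{dn:pDistUonB}, and Lemma~\ref{lm:zeroFreeProbMat}). The only place where care is needed is the mild index bookkeeping when transposing each block so that Lemma~\ref{lm:zeroFreeProbMat} applies in the orientation $n\geq k$, and verifying that the shift produces the range $k_\ell-n_\ell+1$ to $k_\ell$ appearing in \eqref{eq:mainbound}.
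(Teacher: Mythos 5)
Your proof is correct and follows essentially the same route as the paper: Theorem~\ref{lm:lower_block_diag} for the first inequality, independence of the blocks to factor $P_{\text{FR}}(\mtx)$, and Lemma~\ref{lm:zeroFreeProbMat} applied to each full-weight block. You are in fact slightly more careful than the paper on two points it leaves implicit --- the transpose bookkeeping needed to apply Lemma~\ref{lm:zeroFreeProbMat} to the wide $n_\ell\times k_\ell$ blocks, and the factor-counting argument for~\eqref{eq:tight} --- both of which you handle correctly.
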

Theorem \ref{th:mainbound} suggests an approach to obtain lower bounds on the full rank probability of $\mtm\sim \set{U}(\mtb,q)$ for any $\mtb\in \vs{F}_2^{n\times k}$. Let \emph{zeroing} mean the act of setting a matrix element to zero. Noting rank is invariant under row and column permutations: 
\begin{enumerate}
\item Process $\mtb$ by applying any combination of row and column permutations and zeroing elements to obtain $\mta=\diag{\mta_1,\dots,\mta_L}$, with $\mta_\ell=\vc{1}^{n_\ell\times k_\ell}$, where $n_\ell\leq k_\ell$ and $\sum_{\ell=1}^L n_\ell=n$.  
\item Apply Theorem~\ref{th:mainbound}, computing (\ref{eq:mainbound}) for the $n_\ell$ and $k_\ell$ obtained in Step 1. 
\end{enumerate}
The bound can clearly be optimised over the choices made in Step 1. This is discussed in Section~\ref{sn:beDiscussion}. 

For given $\mtb\in \vs{F}_2^{n\times k}$, the bound~(\ref{eq:mainbound}) obtained in Step~2 is a product of $n$ factors of the form $1-q^{-a_i}$, $i=1,2,\dots,n$, where the exponents $a_i\geq 1$ depend on the structure of $\mtb$ via the dimensions of the $\mta_\ell$ obtained in Step~1. As a consquence of~\eqref{eq:tight}, the proposed bound is as tight as, and can be strictly tighter than~(\ref{eq:hobound1}).



Finally, we can also obtain an upper bound.
\begin{theorem}
  Let $\mtm\sim\set{U}(\mtb,q)$ where $\mtb\in \vs{F}_2^{n\times k}$ and $n\leq k$. Then, 
\begin{equation}
  P_{\text{FR}}(\mtm)\leq \prod_{i=k-n+1}^k\left( 1- \frac{1}{q^{i}}\right),
\end{equation}
with equality when $\mtb=\mt{1}^{n\times k}$.
\end{theorem}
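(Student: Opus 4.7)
The plan is to reduce this to Theorem~\ref{lm:lower_block_diag} combined with the full-weight formula in Lemma~\ref{lm:zeroFreeProbMat}. The key observation is that for any $\mtb\in\vs{F}_2^{n\times k}$, we have $\mtb\prec\vc{1}^{n\times k}$ trivially, since the condition ``$(\vc{1}^{n\times k})_{ij}=0\implies b_{ij}=0$'' is vacuous (the all-ones matrix has no zeros). Thus zeroing cannot increase the full rank probability, and $\mtb$ is a zeroing of $\vc{1}^{n\times k}$.

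Concretely, let $\mtm'\sim\set{U}(\vc{1}^{n\times k},q)$ be independent of $\mtm$. Applying Theorem~\ref{lm:lower_block_diag} with $\mta=\mtb$ and $\mtb$ (in the theorem's notation) equal to $\vc{1}^{n\times k}$ yields
\begin{equation}
P_{\text{FR}}(\mtm)\leq P_{\text{FR}}(\mtm').
\end{equation}
It remains only to evaluate $P_{\text{FR}}(\mtm')$ for the full-weight case. Since $n\leq k$, Lemma~\ref{lm:zeroFreeProbMat} is stated for $n\geq k$, so I would apply it to the transpose $(\mtm')^\top$, which is itself a full-weight random matrix of dimensions $k\times n$ with $k\geq n$. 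Because rank is invariant under transposition, $P_{\text{FR}}(\mtm')=P_{\text{FR}}((\mtm')^\top)$, and Lemma~\ref{lm:zeroFreeProbMat} gives
\begin{equation}
P_{\text{FR}}((\mtm')^\top)=\prod_{i=k-n+1}^{k}\left(1-\frac{1}{q^i}\right),
\end{equation}
which is exactly the claimed upper bound.

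Equality when $\mtb=\vc{1}^{n\times k}$ is immediate, since in that case $\mtm$ and $\mtm'$ have the same distribution. There is essentially no obstacle here: the only subtlety is the direction of the inequality in the partial order $\prec$ and the index-range rewriting when transposing to fit the hypothesis of Lemma~\ref{lm:zeroFreeProbMat}, both of which are routine.
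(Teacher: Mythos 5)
Your proposal is correct and follows essentially the same route as the paper: the paper's proof is the one-liner ``Apply Theorem~\ref{lm:lower_block_diag} and Lemma~\ref{lm:zeroFreeProbMat}, noting $\mtb \prec \mt{1}^{n\times k}$,'' and you have simply spelled out the transposition step needed to match the index convention of Lemma~\ref{lm:zeroFreeProbMat} and the trivial equality case.
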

\begin{proof}
Apply Theorem~\ref{lm:lower_block_diag} and Lemma~\ref{lm:zeroFreeProbMat}, noting $\mtb \prec \mt{1}^{n\times k}$. 
\end{proof}

\section{Discussion}\label{sn:beDiscussion}
The proposed bound on full rank probability is applicable to situations involving ``structured'' random matrices $\mtm\sim\set{U}(\mtb,q)$ whose non-identically-zero elements are drawn uniformly from a finite field.  For example, in multiplicative matrix operator communications channels~\cite{NoUcSi11,YaHoMeYe11}, information is encoded as a vector of finite field elements and the channel action is represented by a random matrix. The ability of the receiver to decode depends on whether this matrix has full rank. In the case where the channel is a network performing random linear network coding~\cite{HoMeKoKaEfShLe06}, the structure of this channel matrix is determined by the network topology.  The application of the bound of Thereom \ref{th:mainbound} to evaluate the performance random linear network codes is considered in~\cite{Salmond12}.

Practical implementation and optimisation of Step~1 of the algorithm described in Section~\ref{sn:beMain} requires further consideration. The challenge is to identify the elements of the random matrix that can be zeroed, while maximising the the full rank probability of the resulting block-diagonal matrix. There is a tradeoff between the complexity of the algorithm and tightness of the resulting bound.

Two greedy search algorithms were proposed in~\cite{Salmond12} to perform the required block-diagonalisation. Both algorithms treat $\mtb$ as the incidence matrix of a bipartite graph. The identification of bicliques is the key mechanic of these algorithms, where a biclique is defined as a set of left and right nodes in the bipartite graph with the property that every left node of the biclique is connected to every right node of the biclique. 

In the first algorithm a biclique is chosen and then grown by identifying additional left and right nodes that increase the size of the biclique whilst preserving the biclique property. When the biclique is maximal, i.e. can no longer increase in size, the additional links between the biclique and remaining nodes are removed  from the bipartite graph, which is equivalent to zeroing the corresponding elements of the incidence matrix. Care is taken to ensure that the removal of any link does not produce a random matrix which has zero full rank probability. The algorithm iterates by choosing a new biclique in the remaining graph. The algorithm completes when all bicliques have been grown to maximal size. This algorithm is  $O(n^2)$.

The second algorithm operates similarly except that multiple bicliques are considered in parrallel. During each iteration, multiple links are identified as candidates for removal, which ultimately affects the realisation of maximal bicliques. At the end of an iteration, the algorithm removes only the candidate link that maximises the resulting full rank probability. The second algorithm has higher complexity than the first, $O(n^4)$, but produces block-diagonal matrices with much higher full-rank probabilities and hence tighter bounds.

\appendix
\section*{Appendix}

\begin{proof}[Proof of Lemma \ref{lm:zeroFreeProbMat}]
Let $E_j$ denote the event that the first $j$ columns of $\mtm$ are linearly independent. Then,
\begin{align}
  \Pr(E_{j+1}) &= \Pr(E_{j+1}\mid E_{j})P(E_j) +
  {\Pr(E_{j+1}\mid\bar{E}_{j})} P(\bar{E}_j)\\ 
  &\stackrel{(a)}{=} \Pr(E_{j+1}\mid E_{j})P(E_j)\\
  &\stackrel{(b)}{=} \frac{q^n-q^j}{q^n}P(E_j)\label{eq:zfpm1}
\end{align}
where $(a)$ is due to $\Pr(E_{j+1} \mid \bar{E}_{j})=0$, and $(b)$ is because under $E_j$, the first $j$ columns span a vector space of dimension $j$, with volume $q^j$. Therefore, there are $q^n-q^j$ equiprobable realisations of column $j+1$ that are linearly independent of the first $j$ columns. 

Event $E_1$ occurs provided that the first column in not all-zero, hence $\Pr(E_1) = \frac{q^n-1}{q^n}$ and
\begin{align}
\Pr(E_k) &=  \Pr(E_1)\prod_{j=2}^k \Pr(E_j\mid E_{j-1})\\
&=\prod_{j=1}^k\frac{q^n-q^{j-1}}{q^n}\\
&=  \prod_{i=n-k+1}^n\left( 1- \frac{1}{q^{i}}\right)
\end{align}
\end{proof}

The proof of Theorem \ref{lm:lower_block_diag} requires the following lemma, which also appears as Lemma 1 of~\cite{Bu80}.
\begin{lemma}\label{lm:bndsOnDims}
  Let $\vv \subseteq \vs{F}_q^n$, and let $\vu_\alpha\subseteq\vu_\beta\subseteq\vs{F}_q^n$ be coordinate subspaces spanning the dimensions indexed by $\alpha\subseteq\beta\subseteq\{1,2,\dots,n\}$ such that $\di{\vu_\alpha}=|\alpha|$ and $\di{\vu_\beta}=|\beta|$. Then 
  \begin{align}
    0\leq \di{\vv\cap \vu_{\beta}} - \di{\vv\cap \vu_{\alpha}}\leq |\beta|-|\alpha|.
\end{align}
\end{lemma}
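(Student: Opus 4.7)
The plan is to prove the two inequalities separately; neither requires more than elementary linear algebra. For the lower bound, $\alpha\subseteq\beta$ gives $\vu_\alpha\subseteq\vu_\beta$, so $\vv\cap\vu_\alpha\subseteq\vv\cap\vu_\beta$, and monotonicity of dimension under subspace inclusion yields $\di{\vv\cap\vu_\alpha}\leq\di{\vv\cap\vu_\beta}$.

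For the upper bound, I would introduce the coordinate projection $\pi:\vs{F}_q^n\to\vs{F}_q^n$ that zeroes every coordinate outside $\beta\setminus\alpha$, write $W=\vv\cap\vu_\beta$, and restrict $\pi$ to $W$. Since vectors of $W$ already vanish outside $\beta$, the image $\pi(W)$ lies in the coordinate subspace indexed by $\beta\setminus\alpha$, which has dimension $|\beta|-|\alpha|$. The kernel of $\pi|_W$ consists of those $v\in W$ that vanish on $\beta\setminus\alpha$ as well, i.e.\ that vanish outside $\alpha$; this kernel equals $W\cap\vu_\alpha$, which in turn equals $\vv\cap\vu_\alpha$ because $\vu_\alpha\subseteq\vu_\beta$. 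Applying rank--nullity to $\pi|_W$ then gives
\begin{align*}
\di{\vv\cap\vu_\beta} \;=\; \di{\vv\cap\vu_\alpha} \;+\; \di{\pi(W)} \;\leq\; \di{\vv\cap\vu_\alpha} + (|\beta|-|\alpha|),
\end{align*}
which rearranges to the claimed bound.

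There is no serious obstacle; the only point requiring a little care is verifying that $W\cap\vu_\alpha=\vv\cap\vu_\alpha$, which is where the hypothesis $\vu_\alpha\subseteq\vu_\beta$ enters. An equivalent one-line route avoids the projection entirely by invoking the modular identity $\di{W+\vu_\alpha}+\di{W\cap\vu_\alpha}=\di{W}+\di{\vu_\alpha}$ and noting $W+\vu_\alpha\subseteq\vu_\beta$, so $\di{W+\vu_\alpha}\leq|\beta|$; I would probably present whichever version of the argument reads more cleanly after a final pass.
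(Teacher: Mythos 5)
Your proof is correct. The lower bound is handled exactly as in the paper (monotonicity of dimension under $\vv\cap\vu_\alpha\subseteq\vv\cap\vu_\beta$). For the upper bound your primary argument is a genuinely different route: you restrict the coordinate projection onto the coordinates in $\beta\setminus\alpha$ to $W=\vv\cap\vu_\beta$, identify the kernel as $\vv\cap\vu_\alpha$, and apply rank--nullity, bounding the image by $|\beta|-|\alpha|$. The paper instead applies the modular (dimension) identity to $(\vv\cap\vu_\beta)\cap\vu_\alpha$ and bounds $\di{\vsss{\vv\cap\vu_\beta,\vu_\alpha}}$ by $|\beta|$ --- which is precisely the ``one-line route'' you sketch at the end as an alternative, so you have in effect reproduced the paper's proof as your fallback. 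The two arguments are equally elementary and of comparable length; the projection version makes more explicit use of the fact that $\vu_\alpha$ and $\vu_\beta$ are \emph{coordinate} subspaces (giving a concrete splitting of coordinates), whereas the modular-identity version only uses $\vu_\alpha\subseteq\vu_\beta$ together with $\di{\vu_\alpha}=|\alpha|$ and $\di{\vu_\beta}=|\beta|$, and so is marginally more general. Either version is acceptable; your verification that $W\cap\vu_\alpha=\vv\cap\vu_\alpha$ is the right point to be careful about, and it is handled correctly.
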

\begin{proof}
  The lower bound follows from the fact that $\alpha\subseteq\beta$ implies $\vu_\alpha\subseteq\vu_\beta$ and hence $\vv\cap \vu_{\alpha} \subseteq \vv\cap \vu_{\beta}$. Applying a well-known identity~\cite[Theorem 27.15]{Warner90}.
\begin{align}
\di{\vv\cap \vu_{\alpha}} &= \di{(\vv\cap \vu_{\beta})\cap \vu_{\alpha}}\\
&= \di{\vv\cap\vu_{\beta}}+\di{\vu_{\alpha}} - \di{\vsss{\vv\cap\vu_{\beta},\vu_{\alpha}}}
\end{align}
Since $\vv \cap \vu_{\beta}$ and $\vu_{\alpha}$ are subsets of $\vu_{\beta}$ it follows that $\di{\vsss{\vv\cap\vu_{\beta},\vu_{\alpha}}}\leq |\beta|$. This, together with $\di{\vu_\alpha} = |\alpha|$ delivers the upper bound.
\end{proof}

\begin{proof}[Proof of Theorem \ref{lm:lower_block_diag}]
The main idea is that zeroing one element of $\mtb$ reduces $|\supp(\mtb,q)|$ by a factor of $q$ whereas the number of full rank realisations of $\mtm\sim\set{U}(\mtb,q)$ is reduced by a factor of $q$ or more. 

To this end, let $\mta\prec\mtb\in \vs{F}_2^{n\times k}$ such that $a_{nk} = 0$ and $b_{nk} = 1$ and $a_{ij} = b_{ij}$, $(i,j)\neq (n,k)$. Thus $\mta$ differs from $\mtb$ only in element $nk$, which has been zeroed. As rank is invariant to row and column permutations, the following result  holds without loss of generality for  $\mta\prec\mtb$ differing in a single element. Without loss of generality, we also assume $n\leq k$.

Let $\mtx\sim\set{U}(\mta,q)$ and $\mty\sim\set{U}(\mtb,q)$. Then there are a factor of $q$ fewer realisations of $\mtx$ than $\mty$. 
Partition $\mtx = [\tilde\mtx \mid \vcx]$, where $\tilde\mtx\in\vs{F}_q^{n\times (k-1)}$ is the first $k-1$ columns of $\mtx$ and $\vcx\in\vx$ is the $k$-th column, which lies in the coordinate subspace $\vx$ spanning the dimensions indexed by the non-zero entries of column $k$ of $\mta$. Similarly define $\tilde\mty$ and $\vcy\in\vy$.
It follows that $\vx\subset\vy$, and $\di{\vx} = w-1$ where $w=\di{\vy}$. 

The set of all realisations of $\tilde\mtx$ and $\tilde\mty$ are identical, as $\mta$ and $\mtb$ agree in the first $k-1$ columns. Take any such realisation, $\mtz\in\vs{F}_q^{n\times(k-1)}$ with column space $\vz$. There are three cases to consider: (a) $\rank{\mtz}=n$, (b) $\rank{\mtz}=n-1$ and (c) $\rank{\mtz}<n-1$. 

\renewcommand{\theenumi}{\alph{enumi}}
\begin{enumerate}
\item $\rank{\mtz}=n$

  In this case,  $\vcx,\vcy\in\vz$. All realisations of $\mtx = [\mtz \mid \vcx]$ and $\mty = [\mtz \mid \vcy]$ are full rank. There are a factor $q$ fewer full rank realisations of $\mtx$ compared to $\mty$.

\item  $\rank{\mtz}=n-1$

For $\mtx$ and $\mty$ to have full rank we must have $\vcx,\vcy\not\in\vz$. We proceed following the main idea in the proof of Lemma \ref{lm:zeroFreeProbMat}.

Let $d' = \di{\vz\cap\vx}$, and $d = \di{\vz\cap\vy}$. By Lemma \ref{lm:bndsOnDims},
\begin{align}\label{eq:d_constraint}
d'\leq d\leq d'+1
\end{align}
hence there are two subcases to consider: $d=d'$ and $d=d'+1$.
\renewcommand{\theenumii}{\alph{enumi}.\arabic{enumii}}
\begin{enumerate}
\item $d = d'$

The linearly independent realisations of $\vcx$ are those vectors in $\vx$ lying  outside of $\vz$. There are
\begin{align} 
  |\vx|-|\vz\cap\vx|=q^{w-1}-q^d = q^{d}(q^{w-d-1}-1)
\end{align} 
such vectors. Similarly, the  number of linearly independent choices for $\vcy\in\vy$ is 
\begin{align} |\vy|-|\vz\cap\vy|=q^w-q^d = q^{d}(q^{w-d}-1).
\end{align}
In the event that $\vx\subseteq \vz$ then $w-1 = d'=d$ and there are no linearly independent realisations $\vcx\in\vx$. However in this case $d=d'\implies \vy\cap\vz=\vx$ and there are $q^w-q^d=q^w-q^{w-1}=q^{w-1}(q-1)\geq 1$ independent realisations of $\vcy\in\vy$. If instead $w-1>d$ it follows that 
\begin{align}
\frac{|\vy|-|\vz\cap\vy|}{|\vx|-|\vz\cap\vx|}&=\frac{q^{w-d}-1}{q^{w-d-1}-1}\\
&= q + \frac{q-1}{q^{w-d-1}-1} \label{eq:ratioOfqVals}\\
&> q
\end{align}
Hence the number of realisations of $\vcx\in\vx$ that are linearly independent of $\vz$  is reduced by a factor greater than $q$ compared to $\vcy\in\vy$.

\item  $d=d'+1$

In this case
\begin{equation*}
\frac{|\vy|-|\vz\cap\vy|}{|\vx|-|\vz\cap\vx|}=\frac{q^{w}-q^d}{q^{w-1}-q^{d-1}} = q
\end{equation*}
and the reduction in full rank realisations is by a factor $q$.
\end{enumerate}

\item  $\rank{\mtz}<n-1$

There are no choices resulting in $\mtx$ or $\mty$ being full rank.
\end{enumerate}

In summary, there are a factor of $q$ fewer realisations of $\mtx$ compared to $\mty$, whereas the number of full rank realisations of $\mtx$ is reduced by a factor of at least $q$. The full rank probability of $\mtx$ is as follows:
\begin{align}
P_{\text{FR}}(\mtx) &= \frac{|\setl{\hat{\mtx}\in \supp(\mta,q): \rank{\hat{\mtx}} = n}| }{|\supp(\mta,q)|}\\
&=\frac{|\setl{\hat{\mtx}\in \supp(\mta,q): \rank{\hat{\mtx}} = n}| }{\frac{1}{q}|\supp(\mtb,q)|}\\
&\leq \frac{\frac{1}{q}|\setl{\hat{\mtx}\in \supp(\mtb,q): \rank{\hat{\mtx}} = n}| }{\frac{1}{q}|\supp(\mtb,q)|}\label{eq:ineq_due_to_removal}\\
&= P_{\text{FR}}(\mty)
\end{align}

Which establishes the result for matrices that differ in a single element. For arbitrary $\mta\prec\mtb$ differing in more than one element, we can zero one element at a time to achieve a chain of inequalities that provides the general result.

\end{proof}

\begin{proof}[Proof of Theorem \ref{th:mainbound}]
From Theorem \ref{lm:lower_block_diag} we have
\begin{align}
P_{\text{FR}}(\mty)&\geq P_{\text{FR}}(\mtx)\\
&=\prod_{\ell=1}^L P_{\text{FR}}(\mtx_\ell)
\end{align}
where the second line follows from the fact that $\mtx=\diag{\mtx_1,\mtx_2,\dots,\mtx_L}$ is block diagonal and can only be full rank if each independently chosen block is full rank. Applying \eqref{eq:equiprob} to each of these full weight blocks $\mtx_\ell\sim\set{U}(\mta_\ell,q)$ completes the proof.
\end{proof}

\bibliography{refs}

\end{document}